\documentclass[conference]{IEEEtran}
\usepackage[
  letterpaper,
  left=0.625in,
  right=0.625in,
  top=0.735in,
  bottom=1.05in
]{geometry}

\newif\iffull
\fulltrue

\usepackage[utf8]{inputenc} 
\usepackage[T1]{fontenc}
\usepackage{url}
\usepackage{ifthen}
\usepackage{cite}
\usepackage[cmex10]{amsmath} 
\usepackage{amsfonts,amssymb,dsfont}
\newtheorem{definition}{Definition}
\newtheorem{theorem}{Theorem}

\newtheorem{lemma}{Lemma}
\newtheorem{remark}{Remark}

\DeclareMathOperator{\supp}{supp}
\usepackage{enumerate}
\usepackage{nicefrac}
\usepackage{tikz}
\usepackage{pgfplots}
\pgfplotsset{compat=1.18}
\usetikzlibrary{positioning,shapes,arrows}
\usepackage{mathtools}

\DeclarePairedDelimiterX{\infdiv}[2]{(}{)}{%
  #1\;\delimsize\|\;#2%
}
\newcommand{\Dkl}{D\infdiv}

\usepackage[capitalize,noabbrev]{cleveref}

\IEEEoverridecommandlockouts

\begin{document}
\title{Empirical coordination in the finite blocklength regime: an achievability result\iffull---Extended version\fi} 

\author{%
  \IEEEauthorblockN{Olivier Massicot$^*$, Giulia Cervia$^\dagger$ and Ma\"el Le Treust$^*$}
  \IEEEauthorblockA{$^*$Univ. Rennes, CNRS, Inria, IRISA UMR 6074, F-35000 Rennes, France, Email: \{olivier.massicot, mael.le-treust\}@cnrs.fr \\
  $^\dagger$IMT Nord Europe, Institut Mines-Telecom, Centre for Digital Systems, Lille, Email: giulia.cervia@imt-nord-europe.fr}%
  \thanks{This work was supported by the French National Agency for Research (ANR) via the project n°ANR-22-PEFT-0010 of the France 2030 program PEPR r\'eseaux du Futur.}
}

\maketitle

\begin{abstract}
  Empirical coordination offers a way to understand how agents can coordinate actions under communication constraints. This paper investigates the finite blocklength regime of this problem, where the encoder and decoder aim to produce a sequence of action pairs that is jointly typical with respect to a target distribution. Adopting Shannon’s random coding argument and leveraging the method of types, we analyze the average performance of a random codebook to establish an achievability result. The resulting bound on the optimal rate is presented both in exact form and as an asymptotic expansion, aligning with the prevailing characterizations in the finite blocklength literature. This work extends finite blocklength analysis to the empirical coordination setting, complementing existing results on strong coordination.
\end{abstract}

\section{Introduction}

Empirical coordination in information theory provides a framework to understand how multiple agents can achieve common goals through shared information and coordinated actions under communication constraints. While it traces its origin as far as the common information model of Wyner \cite{1055346}, more recent developments by Cuff et al. \cite{5550277,cuff2011coordination} have delineated the extent of cooperation and coordination in multi-agent networks under communication constraints. The question of the coordination of actions was first examined from a game-theoretic perspective by Gossner et al. \cite{gossner2006optimal}. The empirical coordination framework has since been extended to joint source-channel settings \cite{le2017joint} and to strong coordination over noisy channels with two-sided state information \cite{cervia2020strong}.

Around the same time as Cuff et al.’s work on coordination, Polyanskiy et al. \cite{polyanskiy2010channel} pioneered the study of channel coding in the finite-blocklength regime, soon followed by Kostina and Verdú's work on lossy source coding \cite{kostina2012fixed}. This approach—unlike asymptotic analyses—quantifies fundamental trade-offs in settings where the blocklength is finite, offering insights that are directly relevant to practical systems.

Further developments in the finite-blocklength regime have adapted the approach to wireless communication systems \cite{mary2016finite}, with supporting technical contributions including converse bounds for a range of coding problems \cite{shkel2013converse}, studies of mismatched protection in joint source-channel coding \cite{shkel2014mismatched}, and polar coding solutions for secret key generation \cite{hentila2020polar}. A growing body of recent work has further explored the nonasymptotic regime through risk-aware estimation from compressed data \cite{egan2025risk}, oblivious relaying and variable-length lossy source coding \cite{liu2025nonasymptotic}, rate-distortion-perception trade-offs in distributed computation \cite{gunlu2025low}, and performance limits of secure integrated sensing and communication systems \cite{gunlu2024nonasymptotic}. Complementary to these, new coding strategies have been proposed for consecutive messages with heterogeneous decoding deadlines, through both joint channel coding \cite{nikbakht2022joint} and dirty paper coding \cite{nikbakht2022dirty}.

Adding to the finite blocklength literature in a yet unexplored angle---aside, perhaps, from the work of Cervia et al. \cite{cervia2019fixed} on strong coordination---, this article addresses the finite blocklength empirical coordination problem, where the encoder and decoder aim to produce a sequence of action pairs that is jointly typical with respect to a target distribution. Following Shannon's random coding argument \cite{shannon1948mathematical}, also succesfully deployed in \cite{kostina2012fixed}, we analyze the average performance of a random codebook using the method of types \cite{csiszar2002method} and deduce an achievability result. The resulting bound on the optimal rate, presented in exact form in \cref{thm:exactmain} and as an asymptotic expansion in \cref{thm:main}, is reminiscent of the prevailing results in the finite blocklength literature.

\begin{figure}
\begin{center}
\begin{tikzpicture}[scale=0.8, node distance=1.5cm]
\node (source) at (0,0) [circle,thick,draw,minimum size=0.8cm] {$P_U$};
\node (enc) [right=1cm of source,rectangle,thick,draw,minimum size=0.8cm] {Enc};
\node (dec) [right=of enc,rectangle,thick,draw,minimum size=0.8cm] {Dec};
\node (sink) [right=of dec] {};

\draw [thick,>=stealth,->] (source) -- node[above] {$U^n$} (enc);
\draw [thick,>=stealth,->] (enc) -- node[above] {$M\in\mathcal M$} (dec);
\draw [thick,>=stealth,->] (dec) -- node[above] {$V^n$} (sink);

\end{tikzpicture}
\end{center}
\caption{Rate-constrained empirical coordination: $\mathcal M$ is a fixed finite message set. The objective of the encoder and decoder is to coordinate their actions in order to make the empirical distribution $\hat P_{UV}$ of the pair $(U^n,V^n)$ close to the target distribution $P_{UV}$ as often as possible.}
\label{fig:scheme}
\end{figure}

\subsection{Notation}

Throughout this article, $\mathcal U,\mathcal V$ are finite sets, $P_{UV}\in\Delta(\mathcal U\times\mathcal V)$ is a fixed joint probability distribution, and $U_t$ are i.i.d. random variables with distribution $P_U$ (with $t = 1,2,\dots$). When clear from context, we will omit the subscript variables and write $P(u)$ instead of $P_U(u)$ for $u\in\mathcal U$, likewise for $P_V$, the conditional probability $P_{V|U}$, the joint probability $P_{UV}$, and any other distribution. For a symbol $u\in\supp P_U$, we will also let $P_{V|u}$ denote the distribution over $\mathcal V$ defined by $P_{V|u}(v) = P_{V|U}(v|u)$. We let $\mathsf{Var}$ denote the variance of a random variable. 

We adopt the notion of typicality defined in \cite[Definition 2.8]{csiszar2011information}: we first let $(\delta_n)$ be any fixed sequence of positive real numbers such that $\delta_n \to_n 0$ yet $n\delta_n^2\to_n\infty$---this is the $\delta$-convention \cite[Convention 2.11]{csiszar2011information}---, then say that a sequence $(u^n,v^n)$ is \emph{jointly typical,} fact denoted $(u^n,v^n) \in \mathcal T^n(UV)$ (without explicit dependence on $\delta_n$), if
\begin{enumerate}[(i)]
  \item for all $(u,v)\in\mathcal U\times\mathcal V$,
  \begin{equation}
    \left|\frac1nN(u,v|u^n,v^n) - P(u,v)\right| \leq \delta_n,
  \end{equation}
  where $N(u,v|u^n,v^n)$ is the number of occurrences of the pair $(u,v)$ in the sequence $(u^n,v^n)$, and
  \item $N(u,v|u^n,v^n)=0$ for all $u,v$ such that $P(u,v)=0$.  
\end{enumerate}

While it is certainly convenient to frame the question in terms of typicality (as we may rely on well-established results), we should also like to stress that the problem can just as well be understood in terms of the empirical distribution of the pair $(U^n,V^n)$, denoted $\hat P$. Explicitly,
\begin{equation}
    \hat P(u,v) = \frac1n N(u,v|U^n,V^n).
\end{equation}
Then $(U^n,V^n)$ is jointly typical if and only if $\supp \hat P \subset \supp P$ and $\|\hat P-P\|_\infty \leq \delta_n$.

\subsection{Formulation of the problem}

An encoder observes the sequence $U^n=(U_1,\dots,U_n)$ and sends a message $M$ to a decoder, which produces an output sequence $V^n$, as depicted in \cref{fig:scheme}. The goal of the encoder and decoder is to coordinate the decoder output with the source sequence, more specifically to make the sequences $(U^n,V^n)$ jointly typical with respect to $P_{UV}$ \emph{as often as possible.} The only constraint is that $n$ is fixed, and the message $M$ must be in a fixed finite set $\mathcal M$. 

\begin{definition}
  \label{def:achiev}
  The distribution $P_{UV}$ is $(n,|\mathcal M|,\epsilon)$-achievable with $\epsilon>0$ if there exists a code $(\sigma,\tau)$ producing message $M=\sigma(U^n)$ and output $V^n=\tau(M)$ such that
  \begin{equation}
    \mathbb P[(U^n,V^n) \not\in \mathcal T^n(UV)] \leq \epsilon.
  \end{equation}
  Such a code is called an $(n,|\mathcal M|,\epsilon)$-code. Accordingly, we define the optimal rate as
  \begin{equation}
    \!\mathsf R(n,\epsilon)\!\!
    \triangleq\! \!\min\!\left\{\!\frac1n \!\log \!|\mathcal M|, \!P_{UV}\text{ is $(n,|\mathcal M|,\epsilon)$-achievable}\!\right\}\!.
  \end{equation}
  The dependency on $P_{UV}$ is left implicit henceforth.
\end{definition}

 This article is concerned with achievability results, i.e., upper bounds on $\mathsf R(n,\epsilon)$. The main result is the following.

\begin{theorem}
  \label{thm:main}
  For any $\epsilon\in(0,1)$ fixed,
  \begin{equation}
  \label{eq:mainresult}
    \mathsf R(n,\epsilon)
    \!\leq\! I(U;V) \!+\! \sqrt{\frac{\mathsf{Var}(\mathbb E[\imath(U;V)|U])}{n}}Q^{-1}(\epsilon) \!+\! o\!\left(\frac1{\sqrt n}\right)\!,
  \end{equation}
  where $Q^{-1}$ is the inverse of the Gaussian tail function, the information density $\imath(u;v)$ is defined as
    \begin{equation}
    \imath(u;v)
    \triangleq \log \frac{P_{UV}(u,v)}{P_U(u)P_V(v)}, \quad \forall (u,v) \in \supp P_{UV},
  \end{equation}
  and explicitly,
  \begin{align}
    \label{eq:varEi}
    &\mathsf{Var}(\mathbb E[\imath(U;V)|U]) \nonumber\\
    &\;= \sum_{u\in\supp P_U} P_U(u) \left(\mathbb E[\imath(U;V)|U=u] - I(U;V)\right)^2\!.
  \end{align}
\end{theorem}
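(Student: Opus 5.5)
The plan is to use Shannon's random coding argument with a codebook drawn uniformly from a type class, and then to read off the $\sqrt n$ term from a Berry--Esseen argument on the type of $U^n$.

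\textbf{Construction.} Fix a type $\bar P_V$ of denominator $n$ with $\|\bar P_V-P_V\|_\infty=O(1/n)$ and $\supp\bar P_V\subset\supp P_V$. Draw $|\mathcal M|$ codewords $V^n(1),\dots,V^n(|\mathcal M|)$ i.i.d.\ uniformly on the type class $\mathcal T_{\bar P_V}$. The encoder $\sigma$ sends the first index $m$ with $(U^n,V^n(m))\in\mathcal T^n(UV)$, or an arbitrary index if there is none. The decoder $\tau$ outputs $V^n(m)$. It then suffices to bound the codebook-averaged error by $\epsilon$, since some deterministic code then does at least as well.

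\textbf{Per-sequence covering probability.} Condition on $U^n=u^n$ with type $\hat P_U$, and restrict to $\|\hat P_U-P_U\|_\infty\le \delta_n/2$. This restriction fails with probability $o(1)$ uniformly, because $n\delta_n^2\to\infty$ and Hoeffding applies; it is absorbed into a vanishing slack $\gamma_n$ of $\epsilon$.
\begin{itemize}
\item Build a joint type $Q_{UV}$ with $Q_U=\hat P_U$, $Q_V=\bar P_V$, $\supp Q\subset\supp P$ and $\|Q-P\|_\infty\le\delta_n$. Take $\hat P_U(u)P_{V|u}$ and round it to denominator $n$, then correct the $V$-marginal. The correction is $O(1/\sqrt n)$ in the $\hat P_U$-deviation and $O(1/n)$ from rounding, so it fits inside the $\delta_n$ slack. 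Keeping the support inside $\supp P_{UV}$ during this correction is the fiddly part.
\item By the method of types, a uniform $V^n$ on $\mathcal T_{\bar P_V}$ lands in the conditional type class of $Q$ given $u^n$ with probability
\begin{equation*}
p(u^n)\ge (n+1)^{-|\mathcal U||\mathcal V|}\exp\bigl(-n\,[H(\bar P_V)-H_Q(V|U)]\bigr).
\end{equation*}
\item A first-order Taylor argument gives $H(\bar P_V)-H_Q(V|U)=\sum_u \hat P_U(u)\,\mathbb E[\imath(U;V)|U=u]+O(\delta_n^2+1/n)$. Let $g(u)\triangleq\mathbb E[\imath(U;V)|U=u]$. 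The linear term in the conditional perturbation is exactly what produces $g$, because $\frac{\partial}{\partial Q(v|u)}D(Q_{V|u}\|P_V)$ is evaluated at $P_{V|u}$.
\item The remainder must be $o(1/\sqrt n)$. If $\delta_n^2\sqrt n\to0$ fails, I would not use the full $\delta_n$ ball but stay within $O(\sqrt{\log n/n})$ of $P$. This is allowed because $\mathcal T^n(UV)$ only requires distance at most $\delta_n$, so I may pick $Q$ closer. Restrict $\hat P_U$ to $\|\hat P_U-P_U\|_\infty\le c\sqrt{\log n/n}$, whose complement has probability $O(1/n)$. Then $Q$ can be kept within $O(\sqrt{\log n/n})$ of $P$, and the remainder is $O(\log n/n)$.
\end{itemize}

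\textbf{Error bound and choice of rate.} The averaged error is at most
\begin{equation*}
\mathbb E\bigl[(1-p(U^n))^{|\mathcal M|}\bigr]+\gamma_n\le \mathbb E\bigl[e^{-|\mathcal M|p(U^n)}\bigr]+\gamma_n .
\end{equation*}
Choose $\log|\mathcal M|=nI(U;V)+\sqrt{n}\,\sigma Q^{-1}(\epsilon-\gamma_n-\eta_n)+c'\log n$, where $\sigma^2=\mathsf{Var}(g(U))$ as in \eqref{eq:varEi}. On the event $\sum_t g(U_t)\le nI+\sqrt n\sigma Q^{-1}(\epsilon-\gamma_n-\eta_n)$, the exponent satisfies $|\mathcal M|p(U^n)\ge n$, so $e^{-|\mathcal M|p(U^n)}\le e^{-n}$. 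Berry--Esseen applied to the i.i.d.\ bounded variables $g(U_t)$, whose mean is $I(U;V)$, bounds the probability of the complementary event by $\epsilon-\gamma_n-\eta_n+O(1/\sqrt n)$. Taking $\eta_n$ of order $1/\sqrt n$ then makes the total at most $\epsilon$. A Taylor expansion of $Q^{-1}$ gives \eqref{eq:mainresult} with an $O(\log n/n)$ remainder, which is $o(1/\sqrt n)$.

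The degenerate case $\sigma=0$ is handled directly: the sum is then deterministic and no Berry--Esseen step is needed. I expect the main obstacle to be the uniform lower bound on $p(u^n)$ in the second step, with the exponent linear in $\hat P_U$ and a remainder of $o(1/\sqrt n)$, together with the joint-type construction that keeps marginals exact and supports inside $\supp P_{UV}$.
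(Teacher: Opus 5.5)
\textbf{There is a genuine gap in the per-sequence step, caused by the constant-composition codebook.}

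Every codeword has type exactly $\bar P_V$. So the joint type $Q$ you target must satisfy $\sum_u \hat P_U(u)Q(v|u)=\bar P_V(v)$, whereas $\sum_u\hat P_U(u)P(v|u)$ moves with $\hat P_U$. Hence $Q_{V|U}$ must differ from $P_{V|U}$ by order $\|\hat P_U-P_U\|$, not $O(1/n)$; you say as much yourself.

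But $P_{V|u}$ is not a stationary point of $D(\cdot\|P_V)$. Writing $I_Q=D(Q_{V|U}\|P_V|\hat P_U)-D(\bar P_V\|P_V)$, you get
\begin{equation*}
I_Q=\sum_u\hat P_U(u)g(u)+\sum_{u,v}\hat P_U(u)\bigl(Q(v|u)-P(v|u)\bigr)\imath(u;v)+O\bigl(\|\hat P_U-P_U\|^2+1/n\bigr).
\end{equation*}
The middle term is generically first order in $\hat P_U-P_U$. It therefore contributes $\Theta(\sqrt n)$ to $nI_Q$ and changes the dispersion. Your remainder $O(\delta_n^2+1/n)$ silently treats the conditional perturbation as pure rounding.

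\emph{A concrete counterexample.} Take the paper's example $P=\frac13(\delta_{00}+\delta_{01}+\delta_{10})$. The constraint $Q(1,1)=0$ pins $Q$ down by its two marginals. With $q_u=Q_U(1)$ and $q_v=Q_V(1)$, one has $\partial I_Q/\partial q_u=\log\frac{1-q_u}{1-q_u-q_v}=1$ bit at $P$. By contrast, $g(1)-g(0)=\log\frac32-\frac12\log\frac98=\frac12$ bit. So with your $\bar P_V$ the exponent fluctuates with variance $\frac29$ instead of $\mathsf{Var}(g(U))=\frac1{18}$. For $\epsilon<\frac12$ the scheme therefore cannot deliver \eqref{eq:mainresult}.

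\emph{An outright failure.} If $\supp P_{UV}$ is the graph of a function $f$ (e.g.\ $V=U$), condition (ii) of typicality forces $v^n=f(u^n)$. The type of $v^n$ is then fixed by that of $u^n$. The joint type you need exists only on the event $\hat P_U\circ f^{-1}=\bar P_V$, whose probability vanishes, so the scheme fails completely. The ``fiddly'' support issue is an obstruction, not a technicality.

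\textbf{The missing idea: let the codeword type float.} The paper does this with i.i.d.\ $P_V$ codewords. The conditional-type-class probability is then at least $(n+1)^{-|\mathcal U||\mathcal V|}2^{-nD(T_{V|U}\|P_V|T_U)}$, with no constraint on the $V$-marginal. Equivalently, the exponent is $I_Q+D(Q_V\|P_V)$ with $Q_V$ free, so the codeword type can track $\hat P_U$ at only second-order cost.

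Consequently $T_{V|u}$ can be chosen separately for each $u$, within $1/N(u|u^n)$ of $P_{V|u}$ and with the same support (\cref{lem:typeapp}). This gives
\begin{equation*}
\log\frac1{\Pi_\delta(u^n)}\le\sum_t D(P_{V|u_t}\|P_V)+O(\log n),
\end{equation*}
which is linear in the type of $u^n$ and needs no Taylor expansion at all. With this bound in place of yours, the rest of your argument goes through essentially as in the paper: the $e^{-|\mathcal M|p}$ step, Berry--Esseen on $\sum_t g(U_t)$, the rate rounding, and the case $\sigma=0$.
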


\begin{remark}
    The asymptotic bound \eqref{eq:mainresult} is reminiscent of the famous \cite[Theorem 12]{kostina2012fixed}, where the authors tackle the rate-distortion problem and obtain instead the variance of the tilted information ``$\jmath_d(U)$,'' defined in \cite[Definition 6]{kostina2012fixed} linked to optimization program of the rate-distortion function. In our case, no such optimization program is used and the random variable that ``replaces'' the tilted information---$\mathbb E[\imath(U;V)|U]$---stems naturally from the method of types.
\end{remark}

\begin{remark}
    The variance term in \eqref{eq:varEi} is evidently smaller (and thus provide a tighter bound) than $\mathsf{Var}(\imath(U;V))$ which appears when tackling channel coding, e.g., in \cite[Theorem 45]{polyanskiy2010channel}. It can in fact be identified with the difference between the unconditional information variance and the conditional information variance, both coined in \cite[Eq. (239), (242)]{polyanskiy2010channel}:
    \begin{equation}
        \mathsf{Var}(\mathbb E[\imath(U;V)|U])
        = 
        \mathsf{Var}(\imath(U;V))
        - \mathbb E[\mathsf{Var}(\imath(U;V|U))]
    \end{equation}
    In the case $P_{UV}$ corresponds to a capactity-maximizing solution for some channel coding problem, both quantities are equal and thus the variance term in \eqref{eq:mainresult} vanishes, leaving us with the bound $\mathsf R(n,\epsilon)\leq I(U;V)+o(\nicefrac1{\sqrt n})$.
    
    It is noteworthy that different settings still lead to the same form of bounds, albeit with different random variables' variance.
\end{remark}

\begin{remark}
    The asymptotic expansion does not depend on the sequence $(\delta_n)$, so that it holds just as well for other usual notions of typicality such as that of \cite{cover1999elements,el2011network}. Moreover, if the sequence $(\delta_n)$ further satisfies
    \begin{equation}
        \delta_n \geq \frac1{\pi_{P_U}} \sqrt{\frac{\ln n}n}
    \end{equation}
    where $\ln$ is the natural logarithm and $\pi_{P_U}$ is the minimum non-zero probability of $P_U$, then the $o(\nicefrac1{\sqrt n})$ term can be replaced by $O(\nicefrac{\log n}{n})$.
\end{remark}

\subsection{A numerical example}

To illustrate \cref{thm:main}, we have chosen to plot concurrently the asymptotic upper bound \eqref{eq:mainresult3}, denoted $\bar R(n,\epsilon)$, and the exact average performance of a random codebook, denoted $R^\sharp(n,\epsilon)$ on a particular example. See \cref{fig:plot}. These functions are explicitly defined as follows:
\begin{align}
    \bar{R}(n,\epsilon) \triangleq& I(U;V) \!+\! \sqrt{\frac{\mathsf{Var}(\mathbb E[\imath(U;V)|U])}{n}}Q^{-1}(\epsilon),\label{eq:mainresult3}\\
    R^{\sharp}(n,\epsilon) \triangleq& \min\Big\{\frac1n \log |\mathcal M|,\label{eq:mainresult2}\\
    &~\mathbb E[(1-\mathbb P[(U^n,\hat V^n)\in\mathcal T_\delta^n(UV)|U^n])^{|\mathcal M|}] \leq \epsilon\Big\},\nonumber
  \end{align}
where $\hat V^n$ is i.i.d. according to $P_V$ and independent from $U^n$. As such, $R^\sharp(n,\epsilon)$ is defined as the smallest rate $R\geq0$ such that the average probability of error of a random codebook with $|\mathcal M|=\lfloor2^{nR}\rfloor$ generated using $P_U$ is $\epsilon$ or less. This value can be explicitly computed thanks to \cref{lem:random-codebook}.

The example probability chosen to serve as illustration is $P = \nicefrac13 \delta_{00}+\nicefrac13 \delta_{01}+\nicefrac13 \delta_{10}$, with binary alphabets $\mathcal U = \mathcal V = \{0,1\}$, and with sequence $(\delta_n)$ with $\delta_n = \nicefrac{\sqrt{\nicefrac{\ln n}n}}{12}$. \Cref{fig:plotdelta} displays the graph of $\delta_n$, and \cref{tab:delta} highlights a few sample values.

\begin{figure}
\centering
\begin{tikzpicture}
\begin{axis}[
    width=0.9*\columnwidth,
    grid=major, 
    grid style={dashed,gray!30}, 
    xlabel={$n$},
    ylabel={$R$ (bits)},
    xmin=0,
    xmax=600,
    ymin=0.24,
    ymax = 0.34,
]

\addplot[thick, blue]
table [x=n, y=Rapprox, col sep=comma] 
{plot_data.csv};
\addlegendentry{$\bar R(n,\epsilon)$}
\addplot[thick, purple]
table [x=n, y=R, col sep=comma] 
{plot_data.csv};
\addlegendentry{$R^\sharp(n,\epsilon)$};
\addplot[thick, orange]
table [x=n, y=I, col sep=comma] 
{plot_data.csv};
\addlegendentry{$I(U;V)$}
    
\end{axis}
\end{tikzpicture}
\caption{Comparison of the exact average performance of a random code, $R^\sharp$, with the asymptotic bound $\bar R$, left-hand side of \eqref{eq:mainresult}.}
\label{fig:plot}
\end{figure}

\begin{figure}
\begin{tikzpicture}
\begin{axis}[
    width=\columnwidth,
    grid=major, 
    grid style={dashed,gray!30}, 
    height=3cm,
    xlabel={$n$},
    ylabel={$\delta$},
    xmin=0,
    xmax=600,
    ymin=0,
]

\addplot[thick, purple]
table [x=n, y=d, col sep=comma] 
{plot_data.csv};
\end{axis}
\end{tikzpicture}
\caption{The sequence $(\delta_n)$ chosen: $\delta_n=\nicefrac{\sqrt{\nicefrac{\ln n}n}}{12}$.}
\label{fig:plotdelta}
\end{figure}

\begin{table}
\centering
\begin{tabular}{c|cccccc}
$n$      & 10 & 20 & 40 & 100 & 200 & 400 \\
\hline
$\delta$ & 0.04 & 0.032 & 0.025 & 0.018 & 0.014 & 0.01 \\
\end{tabular}
\caption{Example values of $\delta_n$.}
\label{tab:delta}
\end{table}

\section{Proof of \cref{thm:main}}

\subsection{Further useful notation}

For a distribution $A\in\Delta(\mathcal X)$ over $\mathcal X$ finite, we denote by $\pi_A$ the minimum non-zero probability of $A$, i.e., $\pi_A = \min_{x\in\supp A} A(x)>0$. We let $\Dkl{\cdot}{\cdot}$ denote the Kullback--Leibler divergence, measured in bits. More generally, base 2 is used throughout this article (with $\log$ in base 2, and $\ln$ in base $e$).

We will occasionally require typicality to be defined for other variables with controlled thresholds. In this case, we will write explicitly the dependence on the threshold and the variables, e.g., $u^n\in\mathcal T^n_\delta(U)$. 

Finally, we will use the notion of $(n,|\mathcal M|,\delta,\epsilon)$-codes and achievability when the typicality threshold is fixed to be $\delta>0$ (instead of $\delta_n$ as in \cref{def:achiev}). Accordingly, we will use the notation $\mathsf R(n,\delta,\epsilon)$ to denote the optimal rate in thise case. This notation will be used in the intermediary \cref{thm:exactmain} and in the proof of \cref{thm:main}, noting that $\mathsf R(n,\delta_n,\epsilon) = \mathsf R(n,\epsilon)$.

\subsection{Proof of \cref{thm:main}}

Our article draws many parallels to the article of \cite{kostina2012fixed} who study the rate-distortion problem in the finite-length regime. The reader familiar with \cite{kostina2012fixed} will easily recognize the three main steps of the achievability proof: \cref{lem:random-codebook} measuring the success of a random codebook, \cref{lem:preBE} lower-bounding the probability of success in terms of the empirical distribution of $\Dkl{P_{V|U}}{P_V} = \mathbb E[\imath(U;V)|U]$, and the Berry--Esseen theorem argument deployed in \cref{thm:exactmain} below.

Among these, the second step is the most original and technical one, and leverages arguments based on the method of types and explicit approximation of types. The technical difficulty of proving \cref{thm:exactmain,thm:main} lies in this step.

\begin{theorem}
  \label{thm:exactmain}
  Let $\mathbb V \triangleq \mathsf{Var}(\mathbb E[\imath(U;V)|U])$ for short and fix $\epsilon,\delta\in(0,1)$. For all $n\geq \nicefrac4{\pi_U^2 \delta}, \nicefrac2{\pi_{P_{UV}}}, \nicefrac2{\pi_{P_U} \pi_{P_V}}$ such that
  \begin{equation}
    \epsilon > (2|\mathcal U|+1)\exp\!\left(-\nicefrac{n \delta^2 \pi_{P_U}^2}{2}\right) + \nicefrac1{2\sqrt{n\pi_{P_U}}},
  \end{equation}
  it holds that
  \begin{align}
    &\mathsf R(n,\delta,\epsilon) \nonumber\\
    &\leq I(U;V) + 2(|\mathcal U||\mathcal V|+1) \frac1n \log (n+1) + \frac2n 
    \label{eq:mainexactineq}\\
    &+ \sqrt{\frac{\mathbb V}{n}}Q^{-1}\!\left(\epsilon - (2|\mathcal U|+1)\exp(-\nicefrac{n\delta^2\pi_{P_U}^2}2)-\frac1{2\sqrt{n\pi_{P_U}}}\right)\!. \nonumber
  \end{align}
\end{theorem}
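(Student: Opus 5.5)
We follow the three-step structure announced above: a random-coding bound in the spirit of \cref{lem:random-codebook}, a type-based lower bound on the conditional success probability in the spirit of \cref{lem:preBE}, and a Berry--Esseen step. \emph{Random codebook.} Draw $|\mathcal M|$ codewords i.i.d.\ from $P_V^{\otimes n}$, independently of $U^n$. The encoder sends the index of any codeword jointly typical with $U^n$, and an arbitrary index if none exists. Conditioned on $U^n$, each codeword fails independently, so the average error probability is
\begin{equation*}
\mathbb E\bigl[(1-p(U^n))^{|\mathcal M|}\bigr], \qquad p(u^n)\triangleq\mathbb P[(u^n,\hat V^n)\in\mathcal T^n_\delta(UV)\mid U^n=u^n].
\end{equation*}
Since $(1-p)^{M}\le e^{-Mp}$, it suffices to find a threshold $\gamma$ and a set $\mathcal G$ of source sequences such that $p(u^n)\ge 2^{-n\gamma}$ on $\mathcal G$. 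Taking $|\mathcal M|=\lceil 2^{n\gamma}\ln(\cdot)\rceil$, or more crudely $|\mathcal M|\approx n\,2^{n\gamma}$, makes the contribution of $\mathcal G$ exponentially small, of order $e^{-n}$, absorbed in the $\frac2n$ term. The total error is then at most $\mathbb P[U^n\notin\mathcal G]$ plus that negligible term.

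\emph{Type lower bound, the main obstacle.} Fix $u^n\in\mathcal T^n_{\delta'}(U)$ with type $\hat P_U$, where $\delta'$ is a suitable multiple of $\delta\pi_{P_U}$. We would construct an explicit conditional type $W$ with three properties: each row $W(\cdot|u)$ has denominator $n\hat P_U(u)$, $\supp W\subset\supp P_{V|U}$, and the joint $\hat P_U W$ lies within $\delta$ of $P_{UV}$. The natural construction rounds $n\hat P_U(u)P_{V|u}(v)$ to integers while preserving row sums. The conditions $n\ge 4/(\pi_U^2\delta)$ and $n\ge 2/\pi_{P_{UV}}$ should guarantee that the rounding error, of order $1/(n\pi_U)$, stays below $\delta$ and that supports are not lost. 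The conditional type class then satisfies
\begin{equation*}
p(u^n)\ \ge\ P_V^n(\mathcal T_W(u^n))\ \ge\ (n+1)^{-|\mathcal U||\mathcal V|}\,2^{-n\sum_u \hat P_U(u)\,\Dkl{W(\cdot|u)}{P_V}}.
\end{equation*}
The delicate part is to compare $\sum_u\hat P_U(u)\Dkl{W(\cdot|u)}{P_V}$ with $\frac1n\sum_t \Dkl{P_{V|U_t}}{P_V}=\frac1n\sum_t\mathbb E[\imath(U;V)|U=U_t]$. This comparison needs a continuity estimate on the divergence under rounding. We would use $W(v|u)\ge P(v|u)-1/(n\hat P_U(u))\ge P(v|u)/2$, which follows from $n\ge 2/(\pi_{P_U}\pi_{P_V})$ and the other size conditions, together with a first-order expansion of $\log$. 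The aim is an excess of at most $O(\log(n+1)/n)$, which yields the factor $2(|\mathcal U||\mathcal V|+1)$ in the statement.

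\emph{Berry--Esseen.} Let $\mathcal G$ be the set of $u^n\in\mathcal T^n_{\delta'}(U)$ with $\frac1n\sum_t \mathbb E[\imath|U_t]\le I(U;V)+\sqrt{\mathbb V/n}\,Q^{-1}(\epsilon')$. By Hoeffding and a union bound over $u$ with two-sided deviations, $\mathbb P[U^n\notin\mathcal T^n_{\delta'}(U)]\le 2|\mathcal U|\exp(-n\delta^2\pi_{P_U}^2/2)$; the remaining $+1$ in the coefficient $2|\mathcal U|+1$ covers the leftover codebook error. The summands $\mathbb E[\imath|U_t]$ are i.i.d.\ with mean $I(U;V)$ and variance $\mathbb V$. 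Applying Berry--Esseen gives the Gaussian tail up to a term $\frac1{2\sqrt{n\pi_{P_U}}}$, the third absolute moment having been bounded via $\pi_{P_U}$. Choosing $\epsilon'$ as in the statement and setting $\gamma$ equal to the threshold plus the type-approximation slack yields the bound \eqref{eq:mainexactineq}. If $\mathbb V=0$ the Berry--Esseen step is simply skipped.
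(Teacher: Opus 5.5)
Your skeleton is the paper's: the random-codebook identity and $(1-x)^{M}\le e^{-Mx}$, the conditional-type lower bound $(n+1)^{-|\mathcal U||\mathcal V|}2^{-n\sum_u \hat P_U(u)\Dkl{W_u}{P_V}}$ for a rounded conditional type $W_u=W(\cdot|u)$, Hoeffding with $\delta'=\delta\pi_{P_U}/2$, Berry--Esseen with the third moment controlled through $\pi_{P_U}$, and a ceiling on $|\mathcal M|$. Your local variations are harmless. Checking $\|\hat P_U W-P_{UV}\|_\infty\le \frac1n+\delta'\le\delta$ directly replaces the paper's detour through conditional typicality. Taking $|\mathcal M|\approx n2^{n\gamma}$ instead of the paper's threshold $\gamma=2n\delta_U^2$ also works, but the bookkeeping needs fixing. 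The residual $e^{-n}\le \exp(-n\delta^2\pi_{P_U}^2/2)$ is a probability and is paid by the $+1$ in $2|\mathcal U|+1$. The factor $n$ costs $\frac{\log n}{n}$ in rate, which must come out of the $\frac{2\log(n+1)}{n}$ slack. The $\frac2n$ term only pays for the ceiling.

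The genuine gap is the step you call delicate, which is exactly where the explicit constant comes from. The theorem is non-asymptotic, so you need a bound of the precise form $N(u|u^n)\bigl[\Dkl{W_u}{P_V}-\Dkl{P_{V|u}}{P_V}\bigr]\le|\mathcal V|\log N(u|u^n)$. An excess of $O(\log(n+1)/n)$ does not yield the factor $2(|\mathcal U||\mathcal V|+1)$. The route you indicate is a first-order expansion around $P_{V|u}$ using $W\ge P/2$. It gives $\Dkl{W_u}{P_V}-\Dkl{P_{V|u}}{P_V}=\sum_v (W(v|u)-P(v|u))\log\frac{P(v|u)}{P_V(v)}+\Dkl{W_u}{P_{V|u}}$, and the remainder $\Dkl{W_u}{P_{V|u}}$ is positive. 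After multiplication by $N(u|u^n)$ it is only bounded by $\log e\sum_v 1/(N(u|u^n)P(v|u))$, i.e.\ up to about $\log e$ per pair $(u,v)$, and nothing in your budget pays for it. Two smaller points here. The chain $P(v|u)-1/k\ge P(v|u)/2$ needs $kP(v|u)\ge 2$, which the hypotheses do not quite give; $W\ge P/2$ does hold for floor-rounding, because $\lfloor x\rfloor\ge x/2$ for $x\ge1$. And the hypothesis $n\ge 2/(\pi_{P_U}\pi_{P_V})$ is about $P_V$, not $P_{V|u}$.

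The missing idea is the paper's \cref{lem:typeapp}. By convexity of $\Dkl{\cdot}{P_V}$, linearize at the rounded type rather than at $P_{V|u}$: $\Dkl{W_u}{P_V}\le\Dkl{P_{V|u}}{P_V}+\sum_v (W(v|u)-P(v|u))\log\frac{W(v|u)}{P_V(v)}$. Here the remainder $-\Dkl{P_{V|u}}{W_u}$ has the favourable sign. Set $k=N(u|u^n)$. You then have:
$|W-P|\le 1/k$;
$1/k\le W(v|u)\le 1$, since $kP(v|u)\ge1$ follows from $n\ge 2/\pi_{P_{UV}}$ and $N(u|u^n)\ge nP_U(u)/2$;
$1/\pi_{P_V}\le k$, from $n\ge 2/(\pi_{P_U}\pi_{P_V})$.
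Together these give $|\log\frac{W(v|u)}{P_V(v)}|\le\log k$. Hence the excess is at most $|\mathcal V|\log k$ per row and $|\mathcal U||\mathcal V|\log n$ overall. Combined with the type-class factor and the $\gamma$ term, this gives exactly $2(|\mathcal U||\mathcal V|+1)\log(n+1)$.
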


The proof of \cref{thm:exactmain} relies on \cref{lem:random-codebook,lem:preBE} in \cref{ssec:randomcodebook,ssec:preBE}, and is stated in \cref{ssec:thmexact}. 

\Cref{thm:exactmain} provides an exact upper bound on $\mathsf R(n,\delta,\epsilon)$ which we use as follows to obtain the asymptotic expansion in \cref{thm:main}. As $n\delta_n^2\to_n\infty$, the conclusion of \cref{thm:exactmain} holds for all $n$ large enough with $\delta=\delta_n$. The term inside $Q^{-1}$ in the bound \eqref{eq:mainexactineq} with $\delta=\delta_n$ is equal to $\epsilon - o_n(1)$. The continuity of $Q$ at $\epsilon>0$ finishes the proof of \cref{thm:main}.

\subsection{Bounding the performance of a random codebook}
\label{ssec:randomcodebook}

In this article, we reconsider the classical random codebook argument of Shannon \cite{shannon1948mathematical} to provide an asymptotic expansion of $\mathsf R(n,\epsilon)$. Accordingly, we let $C=\{\hat V^n_m, ~m\in\mathcal M\}$ be a random codebook (each $\hat V_{t,m}$ is drawn i.i.d. according to $P_V$). When there exists a message $m\in\mathcal M$ such that $(U^n,\hat V^n_m)$ are jointly typical, we let the encoder send some such label $M=m$ (say the first such index), and otherwise we let $M$ be random (say the smallest index of $\mathcal M$). The decoder outputs $V^n = \hat V^n_M$.

The stake is thus to upper-bound the probability of ``error,'' i.e., that no message $m$ corresponds to a jointly typical pair of sequences. Let us call $p_C$ this quantity, it is a function of $C$, hence a random variable. If we can show that $\mathbb E[p_C] \leq \epsilon$, then there definitely exists a code $C$ such that $p_C\leq\epsilon$, which thus implements the distribution $P$.

The following lemma, directly adapted from the famous result \cite[Theorem 9]{kostina2012fixed} to suit our needs, quantifies precisely the average performance of the random codebook $C$.

\begin{lemma}
  \label{lem:random-codebook}
  Given two integers $n, |\mathcal M|$ and $\delta\in(0,1)$ all implicitly used in the definition of $C$ and $p_C$,
  \begin{equation}
  \label{eq:lemma1}
    \mathbb E[p_C] = \mathbb E[(1-\mathbb P[(U^n,\hat V^n)\in\mathcal T_\delta^n(UV)|U^n])^{|\mathcal M|}],
  \end{equation}
  where  $\hat V_1,\dots,\hat V_n$ are i.i.d. according to $P_V$ and independent from $U^n$.
\end{lemma}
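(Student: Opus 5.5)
The plan is to condition on the source sequence and use the independence of the codewords. Under the encoding rule above, an error occurs exactly when no codeword in $C$ is jointly typical with $U^n$. Hence
\begin{equation*}
  p_C = \mathbb P\bigl[\forall m\in\mathcal M,\ (U^n,\hat V^n_m)\notin\mathcal T^n_\delta(UV)\bigm| C\bigr],
\end{equation*}
where the probability is over $U^n$ alone, with $C$ held fixed. Two points need care here. When some typical codeword exists, the encoder picks one and the decoder reproduces it exactly, so there is no error. When none exists, the output is not typical whatever fallback index is chosen. So the error event is precisely $\{\forall m,\ (U^n,\hat V^n_m)\notin\mathcal T^n_\delta(UV)\}$.

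Next I take expectations over $C$ and swap the order of integration, which is justified because everything is a probability of an event on the product space of $(U^n,C)$ and $U^n$ is independent of $C$. This gives
\begin{equation*}
  \mathbb E[p_C] = \mathbb E\Bigl[\mathbb P\bigl[\forall m,\ (U^n,\hat V^n_m)\notin\mathcal T^n_\delta(UV)\bigm| U^n\bigr]\Bigr].
\end{equation*}

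Then I condition on $U^n=u^n$. The codewords $\hat V^n_m$, $m\in\mathcal M$, are i.i.d., each distributed as $P_V^{\otimes n}$, and they are independent of $U^n$. The events $\{(u^n,\hat V^n_m)\notin\mathcal T^n_\delta(UV)\}$ are therefore independent across $m$ and all have the same probability $1-\mathbb P[(u^n,\hat V^n)\in\mathcal T^n_\delta(UV)]$, with $\hat V^n$ a generic copy independent of $U^n$. The conditional probability of their intersection is this quantity raised to the power $|\mathcal M|$. Substituting $u^n=U^n$ and taking the outer expectation yields \eqref{eq:lemma1}.

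There is no real obstacle in this argument. The only delicate step is identifying the error event exactly, including the fallback case, so that the result is an equality rather than an inequality. Everything else is Fubini together with independence. Because all sets involved are finite, the manipulations can also be written as explicit finite sums over $u^n$ and over codebook realizations.
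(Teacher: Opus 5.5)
Your proposal is correct and follows the paper's proof essentially step for step. In both, $\mathbb E[p_C]$ is written as the probability that no codeword is jointly typical with $U^n$, one conditions on $U^n$, and the conditional independence and identical distribution of the codewords give the $|\mathcal M|$-th power. The one addition is your explicit check that the fallback index cannot turn an error into a success, which is what makes the identity an equality; the paper leaves this implicit.
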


\begin{IEEEproof}[Proof of \cref{lem:random-codebook}]	
	Since the variables $\hat V^n_m$ are i.i.d., the events $(U^n,\hat V^n_m)\not\in\mathcal T^n_\delta(UV)$ \emph{conditional on $U^n$} are mutually independent and all share the same probability, therefore
	\begin{align}
		\mathbb E[p_C]
		&= \mathbb P[\forall m\in \mathcal M, ~(U^n,\hat V^n_m)\not\in\mathcal T_\delta^n(UV)] \\
		&= \mathbb E[\mathbb P[\forall m\in \mathcal M, ~(U^n,\hat V^n_m)\not\in\mathcal T_\delta^n(UV)|U^n]] \\
		&= \mathbb E[\mathbb P[(U^n,\hat V^n_1)\not\in\mathcal T_\delta^n(UV)|U^n]^{|\mathcal M|}] \\
		&= \mathbb E[(1-\mathbb P[(U^n,\hat V^n_1)\in\mathcal T_\delta^n(UV)|U^n])^{|\mathcal M|}] \tag{\ref{eq:lemma1}}.
	\end{align}
\end{IEEEproof}

The second step consists in lower-bounding the probability
\begin{equation}
  \Pi_\delta(u^n) \triangleq \mathbb P[(U^n,\hat V^n)\not\in\mathcal T_\delta^n(UV)|U^n=u^n]
\end{equation}
in terms of the empirical distribution of $\Dkl{P_{V|U_t}}{P_V}$. The following lemma gives such a lower bound, and is based on the method of types. To compare with \cite[Lemma 2]{kostina2012fixed}, the conditional KL-divergence plays the role of the tilted information.

\subsection{Bounding the conditional probability of typical sequences}
\label{ssec:preBE}

\begin{lemma}
\label{lem:preBE}
  For $\delta\in(0,1)$, $n \geq \nicefrac4{\pi_U^2 \delta}, \nicefrac2{\pi_{P_{UV}}}, \nicefrac2{\pi_{P_U} \pi_{P_V}}$ and for $u^n\in\mathcal T_{\nicefrac{\delta \pi_{P_U}}2}^n(U)$,
  \begin{equation}
    \log \frac1{\Pi_\delta(u^n)}
    \leq 2|\mathcal U||\mathcal V| \log (n+1) + \sum_{t=1}^n \Dkl{P_{V|u_t}}{P_V}.
  \end{equation}
\end{lemma}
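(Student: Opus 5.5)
Before starting, note that the statement only makes sense if $\Pi_\delta(u^n)$ is the probability of the \emph{typical} event, $\mathbb P[(u^n,\hat V^n)\in\mathcal T_\delta^n(UV)]$. This is the quantity that enters \cref{lem:random-codebook}; the "$\not\in$" in its definition looks like a typo. I will prove the bound for this quantity.

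The approach is the method of types. Fix $u^n\in\mathcal T^n_{\delta\pi_{P_U}/2}(U)$ and write $n_u=N(u|u^n)$ for its type counts. Since $\|n_u/n-P(u)\|\le\delta\pi_{P_U}/2$, we get $n_u\ge n\pi_{P_U}/2>0$ for every $u\in\supp P_U$, and $n_u=0$ otherwise. I will lower bound $\Pi_\delta(u^n)$ by the probability of a single conditional type class $\mathcal T_{Q}(V|u^n)$. Here $Q_{V|U}$ is a conditional type compatible with $u^n$, meaning each $Q_{V|u}$ has denominator $n_u$, and it is chosen so that every $v^n$ in the class makes $(u^n,v^n)$ jointly $\delta$-typical. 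The standard type-class bound \cite{csiszar2011information} then gives
\begin{equation*}
\Pi_\delta(u^n)\ \ge\ \mathbb P[\hat V^n\in\mathcal T_Q(V|u^n)]\ \ge\ (n+1)^{-|\mathcal U||\mathcal V|}\,2^{-\sum_u n_u \Dkl{Q_{V|u}}{P_V}}.
\end{equation*}
It then remains to compare $\sum_u n_u \Dkl{Q_{V|u}}{P_V}$ with $\sum_t \Dkl{P_{V|u_t}}{P_V}=\sum_u n_u\Dkl{P_{V|u}}{P_V}$, up to an additive $|\mathcal U||\mathcal V|\log(n+1)$.

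\textbf{Construction of $Q$.} For each $u$, round $n_uP(v|u)$ to integers $k_{u,v}$ with $\sum_v k_{u,v}=n_u$ and $|k_{u,v}-n_uP(v|u)|<1$, for instance by flooring and then distributing the remainder. Set $k_{u,v}=0$ whenever $P(u,v)=0$. Define $Q_{V|u}(v)=k_{u,v}/n_u$.

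Condition (ii) of typicality holds by construction. We also need $k_{u,v}\ge1$ on the support so that $Q$ has the same support as $P_{V|u}$. This follows from
\begin{equation*}
n_uP(v|u)\ \ge\ \tfrac n2\,\pi_{P_U}P(v|u),
\end{equation*}
which is at least of the order of $\tfrac{n}{2}\pi_{P_{UV}}\ge 1$ by the assumption $n\ge 2/\pi_{P_{UV}}$ (one may need to round up in this case).

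For condition (i), the joint empirical frequency is $k_{u,v}/n$. It satisfies
\begin{equation*}
|k_{u,v}/n-P(u,v)|\ \le\ |n_u/n-P(u)|\,P(v|u)+1/n\ \le\ \delta\pi_{P_U}/2+1/n\ \le\ \delta,
\end{equation*}
where the last step uses $n\ge 4/(\pi_{P_U}^2\delta)$.

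\textbf{Divergence comparison (main obstacle).} Write
\begin{equation*}
\Dkl{Q_{V|u}}{P_V}-\Dkl{P_{V|u}}{P_V}=\Dkl{Q_{V|u}}{P_{V|u}}+\sum_v\big(Q(v|u)-P(v|u)\big)\log\frac{P(v|u)}{P_V(v)}.
\end{equation*}
For the first term, I bound it by the $\chi^2$ divergence:
\begin{equation*}
\Dkl{Q_{V|u}}{P_{V|u}}\ \le\ \log e\sum_v\frac{(Q(v|u)-P(v|u))^2}{P(v|u)}\ \le\ \frac{|\mathcal V|\log e}{n_u^2\,P(v|u)}.
\end{equation*}
After multiplying by $n_u$ and using $n_uP(v|u)\gtrsim n\pi_{P_{UV}}/2\ge1$, this contributes $O(|\mathcal V|)$ per symbol $u$.

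For the second, first-order term, each entry deviates by at most $1/n_u$. After multiplying by $n_u$, its contribution is at most $|\mathcal V|\max_{v}|\log(P(v|u)/P_V(v))|$ per $u$. Both $P(v|u)$ and $P_V(v)$ are bounded below in terms of $\pi_{P_{UV}}$ and $\pi_{P_U}\pi_{P_V}$, so this maximum is at most $\log(1/\pi)$ for the relevant minimum probability $\pi$. The conditions $n\ge 2/\pi_{P_{UV}}$ and $n\ge 2/(\pi_{P_U}\pi_{P_V})$ then make it at most $\log(n+1)$.

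Summing over $u$, the total excess is bounded by $|\mathcal U||\mathcal V|\log(n+1)$, the constants being absorbed using $\log(n+1)\ge1$. Combined with the $(n+1)^{-|\mathcal U||\mathcal V|}$ factor from the type-class bound, this gives the claimed $2|\mathcal U||\mathcal V|\log(n+1)$.

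The delicate part is exactly this bookkeeping. One must keep the rounding support-preserving and typical at the same time, and control the log-likelihood-ratio term. If the crude bound on that term turns out too loose, I would instead choose the direction of the rounding remainders to make the first-order term nonpositive: round up the entries where $\log(P(v|u)/P_V(v))$ is smallest.
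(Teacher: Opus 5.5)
Your overall plan matches the paper's. You lower-bound $\Pi_\delta(u^n)$ by the probability of a single conditional type class obtained by rounding $n_uP_{V|u}$, then compare divergences. You are also right that the $\not\in$ in the definition of $\Pi_\delta$ is a typo. Your direct check of joint typicality, $|k_{u,v}/n-P(u,v)|\le\delta\pi_{P_U}/2+1/n\le\delta$, is fine and avoids the paper's detour through conditional typicality.

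The support claim needs a small repair. From $n_u\ge n\pi_{P_U}/2$ you cannot get $n_uP(v|u)\ge n\pi_{P_{UV}}/2$, because $\pi_{P_U}P(v|u)\le P(u,v)$ goes the wrong way. Use instead $n_u\ge n(P(u)-\delta\pi_{P_U}/2)\ge nP(u)/2$, whence $n_uP(v|u)\ge nP(u,v)/2\ge 1$. This preserves the support without rounding up. It is also what bounds your $\chi^2$ remainder by $\log e\sum_v 1/(n_uP(v|u))\le|\mathcal V|\log e$ per $u$.

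The real problem is the final bookkeeping of your main line. Take the sharpest form of your crude estimate, $|\log\frac{P(v|u)}{P_V(v)}|\le\log\frac n2$, which follows from $\pi_{P_{UV}}\le\frac{P(v|u)}{P_V(v)}\le\frac1{P(u)}$. The first-order term then contributes up to $|\mathcal V|(\log n-1)$ per $u$. Adding the remainder gives $|\mathcal V|(\log n+\log e-1)$, which exceeds $|\mathcal V|\log(n+1)$ for every $n\ge 3$. No appeal to $\log(n+1)\ge 1$ absorbs this, so as written your estimates do not sum to the target.

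Your fallback is correct and should become the main argument. Write $k_{u,v}=\lfloor n_uP(v|u)\rfloor+\epsilon_v$, with fractional parts $f_v$ and $r=\sum_v f_v=\sum_v\epsilon_v$. Put the $r$ ones on the smallest values of $L_v=\log\frac{P(v|u)}{P_V(v)}$. This choice minimizes $\sum_v w_vL_v$ over $\{w\in[0,1]^{\mathcal V}:\sum_v w_v=r\}$, a set that contains $f$, so $\sum_v(\epsilon_v-f_v)L_v\le 0$. Since $|\epsilon_v-f_v|\le 1$, typicality, support and the $\chi^2$ bound all survive. The total excess is then at most $|\mathcal U||\mathcal V|\log e\le|\mathcal U||\mathcal V|\log(n+1)$, using $n\ge 2$. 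That is an $O(1)$ loss, sharper than the paper's $O(\log n)$.

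The paper avoids the second-order remainder altogether by taking the convexity tangent at the type rather than at $P_{V|u}$. This gives $\Dkl{Q_{V|u}}{P_V}\le\Dkl{P_{V|u}}{P_V}+\sum_v(Q(v|u)-P(v|u))\log\frac{Q(v|u)}{P_V(v)}$ with no remainder. Since $Q(v|u)\ge 1/n_u$ and $P_V(v)\ge\pi_{P_V}\ge 1/n_u$, each logarithm is at most $\log n_u$ in absolute value. This yields $|\mathcal V|\log n_u$ per $u$ for any rounding (\cref{lem:typeapp}). So the paper's route is shorter and works with any rounding, while yours, once the fallback is adopted, gives a better constant.
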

  
\begin{IEEEproof}[Proof of \Cref{lem:preBE}]
  In this proof, we use the notion of conditional typicality of \cite[Definition~2.4]{csiszar2011information}: $V^n\in\mathcal T_\delta(V|u^n)$ if 
  \begin{enumerate}[(i)]
  \item $|N(u,v|u^n,v^n) - N(u|u^n) P(v|u)|\leq n\delta$ for all $(u,v)\in\mathcal U\times \mathcal V$, 
  \item $N(u,v|u^n,v^n) = 0$ whenever $P(v|u) = 0$.
  \end{enumerate}
  This has the advantage of explicitly singling out the conditional typicality condition, instead of relying on the notion of joint typicality. One recovers joint typicality in $U,V$ by combining typicality in $U$ and conditional typicality in $V|U$: if $u^n\in\mathcal T^n_{\delta_U}(U)$ and $v^n\in\mathcal T_{\delta_V}(V|u^n)$, then it follows that $(u^n,v^n) \in \mathcal T^n_{(\delta_U+\delta_V)|\mathcal U|}(UV)$, see \cite[Lemma 2.10]{csiszar2011information}. We will use this fact with $\delta_U = \delta_V = \nicefrac{\delta\pi_{P_U}}2$ so that $(\delta_U+\delta_V)|\mathcal U|\leq\delta$.

  Let now $\delta,n,u^n$ satisfy the premises. Call $T_U\in\Delta(\mathcal U)$ the type of $u^n$ (i.e., $T_U(u) = \nicefrac{N(u|u^n)}{n}$) and $\hat{\mathcal U} = \supp T_U$. For all conditional type $T_{V|U}$ corresponding to a conditionally typical sequence $v^n\in\mathcal T_{\delta_V}(V|u^n)$, we have
  \begin{align}
    \Pi_{\delta_V}(u^n)
    &\geq \mathbb P[(U^n,\hat V^n) \in \mathbb T^n(T_UT_{V|U})|U^n=u^n] \\
    &\geq \frac1{(n+1)^{|\mathcal U||\mathcal V|}} 2^{-n\Dkl{T_{V|U}}{P_V\,|\,T_U}},
  \end{align}
  where $\mathbb T^n(T_UT_{V|U})$ is the set of sequences $(u^n,v^n)$ of type $T_UT_{V|U}$, and where we use the common notation
  \begin{align}
    \Dkl{T_{V|U}}{P_V\,|\,T_U}
    &= \sum_{u\in\mathcal U} T_U(u) \Dkl{T_{V|u}}{P_V} \\
    &= \frac1n\sum_{t=1}^n \Dkl{T_{V|u_t}}{P_V}.
  \end{align}
  The second inequality, lower bounding the probability of a conditional type, is a standard result of the method of types (see, e.g., \cite[Lemma 2.6]{csiszar2011information}).

  At this stage, we would like to replace $T_{V|U}$ by $P_{V|U}$. This conditional distribution may however not be a conditional \emph{type.} We resort to the coming approximation lemma (whose proof is given in \iffull\cref{sec:technical}\else\cite[Appendix A]{extended}\fi), where we use the following notation. For a probability distribution $A\in\Delta(\mathcal X)$, we let $\hat{\mathcal T}^n_\delta(A)$ denote the set of types of sequences $x^n\in\mathcal T_\delta(X)$ (with tacitly $P_X=A$), leaving the dependence on $A$ explicit.
  
  \begin{lemma}
    \label{lem:typeapp}
    For a finite set $\mathcal X$, two probability distributions $A,B \in\Delta(\mathcal X)$, $\delta\in(0,1)$ and any integer $k\geq \nicefrac1{\pi_A},\nicefrac1{\pi_B},\nicefrac1\delta$,
    \begin{equation}
        \min_{C \in \hat{\mathcal T}_\delta^k(B)} ~\Dkl{C}{A}
        \leq \Dkl{B}{A}  + \frac{|\mathcal X|}k \log k.
    \end{equation}
    Both sides may be positively infinite. 
  \end{lemma}

  Applying \cref{lem:typeapp} with $A=P_V$, $B=P_{V|u}$, $\mathcal X=\mathcal V$, the integer $k=N(u|u^n)\geq n\pi_U - n\delta_U$, and $\delta=\delta_V$ (noting indeed that $N(u|u^n) \geq \nicefrac2{\pi_U \delta}, \nicefrac1{\pi_{P_{V|u}}}, \nicefrac1{\pi_{P_V}}$), we obtain the existence of a conditional type $T_{V|U}$ corresponding to typical sequences $v^n\in\mathcal T_{\delta_V}(V|u^n)$ such that, for all $u\in\hat{\mathcal U}$,
  \begin{align}
    N(u|u^n) \Dkl{T_{V|u}}{P_V} \leq
    &~ N(u|u^n) \Dkl{P_{V|u}}{P_V} \nonumber\\
    &~+ |\mathcal V| \log N(u|u^n),
  \end{align}
  and thus, summing this inequality,
  \begin{equation}
    \Dkl{T_{V|U}}{P_V\,|\,T_U}\!
    \leq\! \sum_{t=1}^n \!\Dkl{P_{V|U_t}}{P_V} \!+\! |\mathcal U||\mathcal V| \!\log n\!.
  \end{equation}
\end{IEEEproof}

\vspace{-0.5cm}

\subsection{Proof of \cref{thm:exactmain}}
\label{ssec:thmexact}

\begin{IEEEproof}[Proof of \cref{thm:exactmain}]
  We let $\delta_U = \delta_V = \nicefrac{\delta\pi_U}{2}$. 
  The first thing to note is that the conditions on $n$ implies that the argument of $Q^{-1}$ belongs to $(0,\epsilon)$. In the remainder, call $R = \frac1n\log|\mathcal M|$ where $|\mathcal M|$ is an integer to fix. As the form of the lower bound on $\Pi_\delta(u^n)$ in \cref{lem:preBE} is similar that used in \cite{kostina2012fixed}, we will leverage similar arguments.

  Thanks to \cref{lem:random-codebook}, and the inequality $(1-x)^\alpha \leq \exp(-\alpha x)$ for $\alpha \geq 1$ and $x\in[0,1]$, 
  \begin{equation}
    \mathbb E[p_C] 
    = \mathbb E[(1-\Pi_\delta(U^n))^{|\mathcal M|}]
    \leq \mathbb E[\exp(-|\mathcal M|\Pi_\delta(U^n))].
  \end{equation}
  Thanks to \cref{lem:preBE} now, for any $\gamma$, 
  \begin{align}
    &\mathbb E[p_C]
    \leq\mathbb E\!\left[\exp(-|\mathcal M| \Pi_\delta(U^n))\right] \\
    &\leq \exp(-\gamma) + \mathbb P\!\left[\log \frac1{\Pi_\delta(U^n)} > nR-\log\gamma\right] \\
    &\leq \exp(-\gamma) + \mathbb P[U^n \not\in \mathcal T^n_{\delta_U}(U)]\\ 
    &+\!\mathbb P\!\left[\sum_{t=1}^n \Dkl{P_{V|u_t}}{P_V} > nR\!-\!\log\!\gamma\!-\!2|\mathcal U||\mathcal V| \!\log (n+1) \!\right]\!\!. \nonumber
  \end{align}
  Thanks to \cite[Lemma 2.12]{csiszar2011information}, the middle term is upper-bounded by $2|\mathcal U|\exp(-2n\delta_U^2)$. In the remainder, we use the value $\gamma = 2n\delta_U^2$ and note that $\log\gamma = \log n + \log(2\delta^2) \leq 2\log (n+1)$. As a result,
  \begin{align}
    &\mathbb E[p_C]
    \leq (2|\mathcal U|+1)\exp(-2n\delta_U^2)\label{eq:beforeBE}\\ 
    &+ \!\mathbb P\!\left[\sum_{t=1}^n \Dkl{P_{V|u_t}}{P_V} > nR-2(|\mathcal U||\mathcal V|+1) \log (n+1) \right]\!. \nonumber
  \end{align}

  When $\mathbb V=0$, $\Dkl{P_{V|U_t}}{P_V} = I(U;V)$ for all $t$, thus
  \begin{equation}
    \mathbb E\!\left[\exp(-|\mathcal M| \Pi_\delta(U^n))\right]
    \leq (2|\mathcal U|+1)\exp(-2n\delta_U^2) \leq \epsilon,
  \end{equation}
  as soon as
  \begin{equation}
    R\geq R^* \triangleq I(U;V) + 2(|\mathcal U||\mathcal V|+1) \frac{\log (n+1)}n.
  \end{equation}
  The rate $R^*$ may not correspond to an integer cardinal for the message space but we may invoke \cref{lem:rateapprox} below, thereby concluding the proof for $\mathbb V=0$.

  \begin{lemma}
    \label{lem:rateapprox}
      If $|\mathcal M| \!= \!\left\lceil 2^{nR^*} \right\rceil$, then $R \!\triangleq \!\frac1n \!\log |\mathcal M|
        \leq R^* + \frac2n$.
    \end{lemma}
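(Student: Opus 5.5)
We need to show that $\frac1n\log\lceil 2^{nR^*}\rceil \le R^*+\frac2n$. The plan is to bound the ceiling of a number $x\ge 1$ by twice the number itself, then take logarithms.

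Set $x = 2^{nR^*}$. Since $R^*\geq 0$ (indeed $I(U;V)\geq0$ and the correction term is nonnegative), we have $x\geq 1$. For any real $x\geq1$,
\begin{equation}
  \lceil x\rceil < x+1 \leq 2x .
\end{equation}
This already yields $\log|\mathcal M| \leq nR^* + 1$.

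Dividing by $n$ gives $R\leq R^*+\frac1n$, which is stronger than the claimed $R^*+\frac2n$. The slack of $\frac2n$ presumably leaves room for a cruder estimate, or for the later case $\mathbb V>0$, where $R^*$ will include the $Q^{-1}$ term and may a priori be treated the same way.

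The only point requiring care is the nonnegativity of $R^*$, which guarantees $x\geq1$ and hence $x+1\le 2x$. If $R^*$ were negative, the ceiling would equal $1$ and $R=0$, so the inequality could fail. In our setting, however, $R^*$ is at least $I(U;V)\ge0$, so there is no real obstacle.

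We should also note that $|\mathcal M|$ being an integer at least $\lceil 2^{nR^*}\rceil$ is exactly what makes $R\ge R^*$. This in turn lets the earlier bound $\mathbb E[p_C]\le\epsilon$, derived for rates $R\geq R^*$, apply to this integer codebook size.
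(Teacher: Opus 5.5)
Your proof is correct and is essentially the paper's: both bound $\lceil 2^{nR^*}\rceil \le 2^{nR^*}+1$. The paper then uses $\ln(1+y)\le y$ to get the extra term $\frac{2^{-nR^*}}{n\ln 2}$, which is exponentially small when $nR^*$ is large but is $\le \frac{1}{n\ln 2}<\frac2n$ only once $R^*\ge 0$, a step it leaves implicit. You instead use $x+1\le 2x$ for $x\ge1$, which gives the uniform term $\frac1n$.

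One caveat on your side remark about reusing this for $\mathbb V>0$: there the $Q^{-1}$ term is negative whenever its argument exceeds $\nicefrac12$, so $R^*\ge I(U;V)$ no longer holds and nonnegativity of $R^*$ needs its own check. It does hold under the theorem's hypotheses, because $\mathbb V\le I(U;V)^2/\pi_{P_U}$ and, in that regime, $|Q^{-1}(\cdot)|\le\sqrt{\ln(n\pi_{P_U})}$, so the negative term is smaller than $I(U;V)$ in absolute value.
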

  
    \begin{IEEEproof}[Proof of \cref{lem:rateapprox}]
      \begin{align}
        R
        &\leq \frac1n \log (2^{nR^*}+1)
        = R^* + \frac1{n\ln 2} \ln (1+2^{-nR^*}) \\
        &\leq R^* + \frac1{n\ln 2} 2^{-nR^*}.
      \end{align}
    \end{IEEEproof}

  When $\mathbb V>0$, the last term of \eqref{eq:beforeBE} can instead be upper-bounded using the Berry--Esseen theorem (see \cite{shevtsova2011absolute}):
  \begin{align}
    &\mathbb P\!\left[nR - 2(|\mathcal U||\mathcal V|+1) \log (n+1) < \sum_{t=1}^n \Dkl{P_{V|U_t}}{P_V}\right] \nonumber \\
    &\quad\leq Q\!\left(\frac{nR - 2(|\mathcal U||\mathcal V|+1) \log (n+1) - nI(U;V)}{\sqrt{n\mathbb V}}\right) \nonumber\\
    &\quad\quad+ \frac{\mathbb E[|\Dkl{P_{V|U}}{P_V}-I(U;V)|^3]}{2\mathbb V^{3/2} \sqrt n}.
  \end{align}
  The following lemma allows to upper-bound the second term. Its proof is given in \iffull\cref{sec:technical}\else\cite[Appendix A]{extended}\fi.

\begin{lemma}
  \label{lem:third-moment-bound}
  Assume $\mathbb V>0$. Then
  \begin{equation}
    \frac{\mathbb E[|\Dkl{P_{V|U}}{P_V}-I(U;V)|^3]}{\mathbb V^{3/2}}
    \leq \frac1{\sqrt{\pi_{P_U}}}.
  \end{equation}
\end{lemma}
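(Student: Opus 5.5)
The plan is to prove this directly. The only facts needed are that $\Dkl{P_{V|U}}{P_V}$ is a function of $U$ alone, and that $U$ takes finitely many values, each with probability at least $\pi_{P_U}$. No structural property of the divergence itself will be used.

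First I will set notation. Write $X \triangleq \Dkl{P_{V|U}}{P_V} - I(U;V)$. This is a function of $U$, taking the value $x_u \triangleq \Dkl{P_{V|u}}{P_V} - I(U;V)$ when $U=u$, for $u\in\supp P_U$. Since $\mathbb E[\Dkl{P_{V|U}}{P_V}] = I(U;V)$, we have $\mathbb E[X]=0$. Hence, by \eqref{eq:varEi},
\begin{equation}
  \mathbb V = \mathbb E[X^2] = \sum_{u\in\supp P_U} P_U(u)\, x_u^2 .
\end{equation}

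The key step is a uniform bound on the values $|x_u|$. For every $u\in\supp P_U$, keeping only the single term indexed by $u$ in the sum above gives
\begin{equation}
  \mathbb V \geq P_U(u)\, x_u^2 \geq \pi_{P_U}\, x_u^2 .
\end{equation}
Therefore $\max_{u\in\supp P_U} |x_u| \leq \sqrt{\mathbb V/\pi_{P_U}}$.

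Finally, I will bound the third absolute moment by factoring out one power of $|X|$:
\begin{equation}
  \mathbb E[|X|^3] \leq \max_{u\in\supp P_U}|x_u| \cdot \mathbb E[X^2] \leq \sqrt{\frac{\mathbb V}{\pi_{P_U}}}\,\mathbb V = \frac{\mathbb V^{3/2}}{\sqrt{\pi_{P_U}}} .
\end{equation}
Dividing by $\mathbb V^{3/2}$, which is legitimate because $\mathbb V>0$ by assumption, yields the claim. I do not expect a genuine obstacle here. The only point requiring care is the observation that a single atom of probability at least $\pi_{P_U}$ already controls its own deviation through the variance; the rest is a one-line Hölder-type estimate.
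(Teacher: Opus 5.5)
Your proposal is correct and follows essentially the same argument as the paper. Both bound the largest deviation $\max_u |x_u|$ by $\sqrt{\mathbb V/\pi_{P_U}}$ using the single term of the variance sum at the maximizing atom, and then factor one power of $|X|$ out of the third absolute moment to get $\mathbb E[|X|^3]\leq \max_u|x_u|\,\mathbb V\leq \mathbb V^{3/2}/\sqrt{\pi_{P_U}}$.
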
  

  To manage the first term (and the first term of \eqref{eq:beforeBE}), we would like to set the rate to
  \begin{align}
    R^*
    &\triangleq I(U;V) + 2(|\mathcal U||\mathcal V|+1) \frac1n \log (n+1) \\
    &+ \sqrt{\frac{\mathbb V}{n}}Q^{-1}\!\left(\epsilon - (2|\mathcal U|+1)\exp(-2n\delta_U^2)-\frac1{2\sqrt{n\pi_{P_U}}}\right)\!, \nonumber
  \end{align}
  and obtain a probability of error upper-bounded by $\epsilon$. Unfortunately, this rate may again not correspond to an integer cardinal for the message space, but we may invoke \cref{lem:rateapprox} as before, which leads to the additional term found in the statement of the theorem. This concludes the proof for $\mathbb V>0$. 
\end{IEEEproof}

\section{Exact random codebook performance simulations}

Although it remains computationally intensive to evaluate $\mathsf R(n,\epsilon)$, we can evaluate the average error probability of a random codebook for reasonably small values of $n$ (and in fact $n\delta_n$). We can then compare these exact values to the approximation given by \cref{thm:main} and the upper bound  of \cref{thm:exactmain}.

The exact error probability of a random codebook can be evaluated as follows. First note that $\Pi(u^n)$ only depends on the type of $u^n$, denoted $T_U$ thereafter:
\begin{equation}
  \Pi(u^n)
  = \sum_{\substack{T_{V|U}\in \Delta(\mathcal V|u^n)\\\text{s.t. }T_U T_{V|U} \in \hat{\mathcal T}(P_{UV})}}
  \prod_{u\in\mathcal U} \Lambda(n_u,T_{V|u},P_V),
\end{equation}
where $\hat{\mathcal T}(P_{UV})$ is the set of joint probability distributions $Q$ such that $\|P-Q\|_\infty\leq\delta_n$ and the function $\Lambda$ is defined as follows: $\Lambda(n,T_X,P_X)$ is the probability that the type of an i.i.d. sequence $X^n$ of distribution $P_X$ is equal to $T_X$, and it can be expressed as
\begin{equation}
  \Lambda(n,T_X,P_X)
= \frac{n!}{\prod_{x\in\mathcal X} (n T(x))!} 
  \prod_{x\in\mathcal X} P(x)^{nT(x)},
\end{equation}
see for instance \cite{csiszar2002method}. We will abusively write $\Pi(u^n) = \Pi(T_U)$ to emphasize that $\Pi(u^n)$ only depends on the type $T_U$ of $u^n$.

We can then express the error probability of a random codebook by splitting along the types of $U^n$:
\begin{align}
  &\mathbb E[(1- \Pi(U^n))^{|\mathcal M|}] \nonumber\\
  &\quad= \sum_{T_U\in\Delta_n(P_U)} (1- \Pi(T_U))^{|\mathcal M|} \Lambda(n,T_U,P_U).
\end{align}

With this analysis in hand, we can evaluate the average probability $\mathbb E[(1- \Pi(U^n))^{|\mathcal M|}]$ of error of a random codebook with $n,|\mathcal M|,P$ fixed. We can then find the smallest rate $R$ such that $\mathbb E[(1- \Pi(U^n))^{|\mathcal M|}] \leq \epsilon$ by dichotomy on $|\mathcal M|$. The computations of $\Pi(T_U)$ and $\Lambda(n,T_U,P_U)$ only need to be carried out once (for a given $n$), which makes the dichotomy efficient. Moreover, the computations of $\Lambda(n,Q_{V|u},P_V)$ can be stored in a lookup table and reused even as $n$ changes, which further reduces the computational burden. Finally, we note that the number of joint types corresponding to jointly typical sequences grows polynomially with $n \delta_n$, of order $O((n\delta_n)^{|\mathcal U||\mathcal V|-1})$, which makes the above computations tractable for reasonably small values of $n\delta_n$. The code used to produce \cref{fig:plot} is available at \cite{github_script}.

\IEEEtriggeratref{14}

\clearpage
\bibliographystyle{IEEEtran}
\bibliography{bibliofile}

\iffull
\cleardoublepage

\appendices

\section{Proof of the technical lemmas}
\label{sec:technical}  
  
\begin{IEEEproof}[Proof of \Cref{lem:typeapp}]
  In this proof, we let $\mathcal X^* = \supp B$. If $B\not\ll A$, there is nothing to prove as the right-hand side is positively infinite. We assume thus that $\mathcal X^* \subset \supp A$. For any $C\in\Delta(\mathcal X)$ with support equal to $\mathcal X^*$, by convexity of $\Dkl{\cdot}{A}$,
  \begin{equation}
    \Dkl{C}{A}
    \leq \Dkl{B}{A} + \sum_{x\in\mathcal X^*} (C(x)-B(x)) \log\frac{C(x)}{A(x)}.
  \end{equation}
  Consider taking $C^*$ with $C^*(x)=0$ for $x\in\mathcal X\setminus\mathcal X^*$ and $C^*(x) = \frac{\lfloor kB(x) \rfloor + \epsilon_x}{k}$ for $x\in\mathcal X^*$ where $\epsilon_x \in \{0,1\}$ is to be adjusted. Note that $k\pi_B\geq1$ implies that $\lfloor kB(x) \rfloor\geq1$ for all $x\in\mathcal X$, hence $\supp C^* = \supp B$ and $\pi_{C^*}\geq\nicefrac1k$. For $C^*$ to be a type with $k$ samples, it suffices that
  \begin{equation}
    \sum_{x\in\mathcal X^*} \frac{\lfloor kB(x) \rfloor + \epsilon_x}{k} = 1,
  \end{equation}
  i.e., that
  \begin{equation}
    \sum_{x\in\mathcal X^*} \epsilon_x 
    = k - \sum_{x\in\mathcal X^*} \lfloor kB(x) \rfloor.
  \end{equation}
  The right-hand side is an integer between $0$ and $k$, hence there is a choice of $\epsilon$ that makes $C^*$ a type. We now fix $\epsilon$ as such. Surely, for all $x\in\mathcal X$,
  \begin{equation}
    |C^*(x)-B(x)| \leq \frac1k \leq \delta.
  \end{equation}
  As a result, $C^* \in \mathcal B_\delta^k(B)$, and
  \begin{align}
    &\min_{C \in \hat {\mathcal T}_\delta^k(B)} \Dkl{C}{A} \\
    &\quad\leq \Dkl{C^*}{A} \\
    &\quad\leq \Dkl{B}{A} + \sum_{x\in\mathcal X^*} (C^*(x)-B(x)) \log\frac{C^*(x)}{A(x)} \\
    &\quad\leq \Dkl{B}{A} + \sum_{x\in\mathcal X^*} \frac1k \log \max\!\left(k,\frac1{\pi_A}\right). 
  \end{align}
\end{IEEEproof}

\begin{IEEEproof}[Proof of \Cref{lem:third-moment-bound}]
  Let $Z = \Dkl{P_{V|U}}{P_V}-I(U;V)$, so that $\mathbb E[Z]=0$ and $\mathbb E[Z^2]=\mathbb V$. Since $\mathcal U$ is finite, the maximum
  \begin{equation}
    \mathbb D \triangleq \max_{u\in\supp P_U} |\Dkl{P_{V|U}}{P_V}-I(U;V)|
  \end{equation}
  is finite and is attained at some $u^\star\in\supp P_U$. Therefore,
  \begin{align}
    \mathbb V &= \sum_{u\in\supp P_U} P_U(u)\,|\Dkl{P_{V|U}}{P_V}-I(U;V)|^2 \\
    &\geq P_U(u^\star) \mathbb D^2 
    \geq \pi_{P_U} \mathbb D^2.
  \end{align}
  Hence
  \begin{align}
    \mathbb E[|Z|^3]
    &= \sum_{u\in\supp P_U} P_U(u)\,|\Dkl{P_{V|U}}{P_V}-I(U;V)|^3 \\
    &\leq \mathbb D \sum_{u\in\supp P_U} P_U(u)\,|\Dkl{P_{V|U}}{P_V}-I(U;V)|^2 \\
    &= \mathbb D\,\mathbb V
    \leq \frac{\mathbb V^{3/2}}{\sqrt{\pi_{P_U}}}.
  \end{align}
  Dividing by $\mathbb V^{3/2}$ proves the claim.
\end{IEEEproof}

\fi

\end{document}